\newtheorem{definition}{Definition}
\newtheorem{lemma}{Lemma}
\newtheorem{theorem}{Theorem}
\newtheorem{observation}{Observation}
\newcommand{\believes}{\mid\equiv}
\newcommand{\sees}{\triangleleft}
\newcommand{\oncesaid}{\mid\sim}
\newcommand{\controls}{\Rightarrow}
\newcommand{\nonce}[1]{\#(#1)}
\newcommand{\sharekey}[1]{\xleftrightarrow{#1}}
\newcommand{\pubkey}[1]{\xmapsto{#1}}
\newcommand\encircle[1]{%
  \tikz[baseline=(X.base)] 
    \node (X) [draw, shape=circle, inner sep=0] {\strut #1};}
\lstdefinelanguage{Lean}{
morekeywords={lemma, assume, have, from, show, inductive, Type, notation, def, theorem, axiom, Prop, example, \$},
morecomment=[l]{--}, }
\title{DELP: Dynamic Epistemic Logic for\\ Security Protocols}
\author{
    Ioana Leuștean and  Bogdan Macovei\\
    \textit{Faculty of Mathematics and Computer Science}\\
    \textit{University of Bucharest}\\
    Bucharest, Romania\\
    ioana.leustean@unibuc.ro, bogdan.macovei@unibuc.ro
}
\date{\empty} 
\begin{document}
\maketitle
\begin{abstract} 
The formal analysis of security protocols is a challenging field, with various approaches being studied nowadays. The famous Burrows-Abadi-Needham Logic was the first logical system aiming to validate security protocols. Combining ideas from previous approaches,  in this paper we define a complete system of \textit{dynamic epistemic logic} for modeling security protocols.  Our logic  is implemented, and few of its properties are verifyied, using the theorem prover Lean.
\end{abstract}

\section{Introduction}
This paper presents \textit{DELP}, a dynamic epistemic logic  for analysing security protocols.  In order to define our logic, we  combine the epistemic approach to authentification from \cite{halpern2017epistemic}, the expectation semantics from \cite{van2014hidden} and the operational semantics for security protocols from  \cite{hollestelle2005systematic}.  

Our main contributions are: (i) the definition of \textit{DELP} as a sound and complete system with respect to an expectation semantics representing the adversary knowledge; (ii) the implementation of \textit{DELP} in the theorem prover \textit{Lean}. Consequently, using \textit{Lean}: (iii) we defined translations in \textit{DELP} for a few inference rules of the  Burrows-Abadi-Needham (\textit{BAN}) logic \cite{burrows1989logic} and we proved their soundness, (iv) we defined the \textit{Needham-Schroeder} authentication protocol as a theory in \textit{DELP} and we verified a few security claims.


Section \ref{prel} presents the \textit{Needham-Schroder} security protocol and recalls the  formal approaches from \cite{halpern2017epistemic}, \cite{van2014hidden} and \cite{hollestelle2005systematic}.  In Section \ref{delp} we define the system \textit{DELP} and we prove its properties.  Section \ref{leandelp} contains the  Lean implementation  of \textit{DELP}. Few deduction rules of the \textit{BAN} Logic are defined in \textit{DELP} and their soundness is proved using the \textit{Lean} implementation.  In  Section \ref{nsprotocol} we study the \textit{Needham-Schroeder} authentication protocol using \textit{DELP} and its \textit{Lean} implementation. The last section contains conclusions and further developments. 

\section{Preliminaries: formal analysis of security protocols}\label{prel}

A \textit{security protocol} is defined as a set of rules and conventions that determine the exchange of messages between two or more agents in order to implement a security service. The protocol must be unambiguous and must allow the description of several roles, so that an agent can perform a certain role at a certain protocol round. An example of a security protocol, which we will mention and use in this paper, is the \textit{Needham-Schroeder} protocol.

\subsection{The Needham-Schroeder symmetric key protocol for key exchange}

The protocol specification for three agents is as follows:

\begin{align*}
    A \to S: &\ A, B, N_a \\
    S \to A: &\ \{ N_a, B, K_{ab}, \{ K_{ab}, A \}_{K_{bs}} \}_{K_{as}} \\
    A \to B: &\ \{ K_{ab}, A \}_{K_{bs}} \\ 
    B \to A: &\ \{ N_b \}_{K_{ab}} \\ 
    A \to B: &\ \{ N_b - 1 \}_{K_{ab}}
\end{align*}

A step-by-step description of the protocol is:
\begin{enumerate}
    \item Alice initiates the connection with the Server, sending who she is, with whom she wants to communicate and a \textit{nonce};
    \item the Server sends - encrypted with the common key between Alice and Server - the nonce generated by Alice, the identity of Bob and the communication key between Alice and Bob, to which is added a message that only Bob can decrypt (being encrypted with the communication key between Bob and Sserver), which contains the communication key shared by Alice and Bob; in this way, Alice cannot read the message sent by Server to Bob;
    \item Alice sends Bob the message that it could not decrypt, received from the Server;
    \item Bob decrypts the message, and sends Alice a \textit{nonce} encrypted with the common key between Alice and Bob;
    \item Alice receives Bob's message, decypts it, and resends it, applying a simple function to it - in this case, it decrements it. This step is useful in two situations: it is a first protection on a \textit{reply attack} and it shows that the agents are still \textit{alive} in the session.
\end{enumerate}

\subsection{BAN Logic}

We will briefly present the BAN logic, based on \cite{burrows1989logic}. The mathematical system contains the following sets: a set of participating agents in communication protocol sessions - named, generally, using capital letters of the beginning of the alphabet (A, B, ...), a set of keys - named, generally, $K_{a, b}$ for the public key between agents A and B, $K_a$ for A's public key and $K_a^{-1}$ for A's secret key, and a set of messages - named, generally, using capital letters of the end of the alphabet (X, Y, ...). An encrypted message is denoted by writing $\{X\}_k$, meaning that the message $X$ is encrypted with the key $k$.

The specific formulas introduced in BAN logic are the following: 
\begin{itemize}
    \item $P \believes X$: the agent $P$ \textbf{believes} the message $X$;
    \item $P \sees X$: the agent $P$ \textbf{sees} or \textbf{receives} $X$;
    \item $P \oncesaid X$: the agent $P$ \textbf{once said} or \textbf{sends} $X$;
    \item $P \controls X$: the agent $P$ \textbf{controls}  $X$ or \textbf{have jurisdiction} over $X$;
    \item $\nonce X$: $X$ is a \textit{nonce};
    \item $P \sharekey{k} Q$: the agents $P$ and $Q$ shares the communication key $k$;
    \item $\pubkey{k} P$: $k$ is $P$'s public key;
    \item $\{X\}_k$: $X$ is encrypted with the key $k$;
    \item $<X>_Y$: $X$ is encrypted with the common secret $Y$.
\end{itemize}

In the sequel we recall only two deductions rules, we refer to \cite{burrows1989logic} for the full deduction system.

\ 

The  \textbf{Message Meaning Rule}, formally defined by
\begin{align}
    \frac{P \believes Q \sharekey{K} P \ \ P \sees \{ X \}_K}{P \believes Q \oncesaid X}
\end{align}
can be read as follows: if agent \texttt{P} belives that he has a communication key $K$ with  agent \texttt{Q}, and agent \texttt{P} receives a message \texttt{X} encrypted under $K$, then  $P$  belives that the encrypted message was sent by \texttt{Q}.

\

The  \textbf{Jurisdiction} rule, formally defined by
\begin{align}
     \frac{P \believes Q \controls X \ \ P \believes Q \believes X}{P \believes X}
\end{align}
can be  read as follows: if agent \texttt{P} belives that agent \texttt{Q} has jurisdiction over a message \texttt{X} and, furthermore, agent \texttt{P} belives  that \texttt{Q} belives \texttt{X}, then \texttt{P} belives  \texttt{X}.

\subsection{An approach based on epistemic logic}

In this subsection, we recall the main ideas from \cite{halpern2017epistemic}, and we refer to  \cite{van2007dynamic} for a comprehensive presentation of dynamic epistemic logic. 

In this paper, there are defined $K$ (the set of communication keys), $N$ (the set of \textit{nonce}s), $T$ (the set of plain texts) and $\Phi$ (the set of formulas). The BNF specification of the language is:
\begin{align*}
    \textbf{s} &::= s \ | \ x \\
    \textbf{m} &::= t \ | \ k \ | \ n \ | \ i \ | \ (m_1, m_2) \ | \ \{m\}_k \ | \ \varphi \\
    \varphi &::= p \ | \ sent_i(s) \ | \ recv_i(s) \ | \ extract_i(m) \ | \ \neg \varphi \  | \ \varphi_1 \land \varphi_2 \ | \ K_i \varphi \ | \\ & \ \ \ \ \ \bigcirc \varphi \ | \ \encircle{-} \varphi \ | \ \Box \varphi \ |  \ \boxed{-} \varphi \ | \ \exists x \varphi \ | \ [m] = s \ | \ s \sqsubseteq s' \ | \ Pr_i(\varphi) \geq \alpha
\end{align*}
where $p$ is an atomic formula, $i$ is an arbitrary agent, $m$ is an arbitrary message, $t \in T$, $k \in K$, $n \in N$, $\alpha \in [0, 1]$ a probability, $s$ a string, $x$ a variable over strings and $\varphi \in \Phi$. 

For semantics, the models are 
$$I = (R, \pi, \mathbf{C}, \{\mu_C\}_{C \in \mathbf{C}})$$
where $R$ is a protocol rounds system, $\pi$ is an evaluation function, $\mathbf{C}$ is a partition of $R$, and for every $C \in \mathbf{C}$, the measure $\mu_C$ is the distribution probability over rounds in $C$. The inductive interpretation of formulas in this models are:
\begin{align*}
    &(I, r, m) \models p \Longleftrightarrow \pi(r(m))(p) \text{ is true} \\
    &(I, r, m) \models \neg \varphi \Longleftrightarrow (I, r, m) \not \models \varphi \\
    &(I, r, m) \models \varphi_1 \land \varphi_2 \Longleftrightarrow (I, r, m) \models \varphi_1 \text{ and } (I, r, m) \models \varphi_2 \\
    &(I, r, m) \models K_i \varphi \Longleftrightarrow  \text{ for all } (r', m') \sim_i (r, m), \\ & \ \ \ \ \ \ \ \ \ \ \ \ \ \ \text{we have } (I, r', m') \models \varphi \\
    &(I, r, m) \models \bigcirc \varphi \Longleftrightarrow (I, r, m+1) \models \varphi \\
    &(I, r, m) \models \encircle{-} \varphi \Longleftrightarrow m = 0 \text{ or } (I, r, m-1) \models \varphi \\
    &(I, r, m) \models \Box \varphi \Longleftrightarrow \text{ for all } m' \geq m, (I, r, m') \models \varphi \\
    &(I, r, m) \models \boxed{-} \varphi \Longleftrightarrow \text { for all } \ m' \leq m, (I, r, m') \models \varphi \\
    &(I, r, m) \models Pr_i(\varphi) \geq \alpha \Longleftrightarrow\\ & \ \ \ \  \mu_{r,m,i}(\{(r', m') \ | \ (I, r', m') \models \varphi \} \cap K_i(r, m) \cap \mathbf(C)(r)) \geq \alpha \\
    &(I, r, m) \models \exists x \varphi \Longleftrightarrow \text{ exists } s \text{ string}, (I, r, m) \models \varphi[s/x]
\end{align*}

\subsection{An approach based on operational semantics}
From \cite{hollestelle2005systematic}, the main point of interest is the terms deduction system. In this formal system we have terms (roles, messages, keys and \textit{nonce}s), variables over \textit{Var}, \textit{Fresh} and \textit{Role} sorts, functions symbols (in \textit{Func}), the protocols specifications and a labeled transition system for the execution of the protocols.

Having $\Gamma$ a knowledge set, the term deduction rules are:
\begin{itemize}
    \item if $t \in \Gamma$, then $\Gamma \vdash t$;
    \item $\Gamma \vdash t_1$ and $\Gamma \vdash t_2$ if and only if $\Gamma \vdash (t_1, t_2)$;
    \item if $\Gamma \vdash t$ and $\Gamma \vdash k$, then $\Gamma \vdash \{t\}_k$;
    \item if $\Gamma \vdash \{t\}_k$ and $\Gamma \vdash k^{-1}$, then $\Gamma \vdash t$;
    \item if $\Gamma \vdash t_i$, $1 \leq 1 \leq n$, then $\Gamma \vdash f(t_1, t_2, ..., t_n)$, where $f$ is a function symbol of \textit{Func}, with the arity $n$.
\end{itemize}

\subsection{An approach based on expectation models}

In this subsection, we will present the main results of \cite{van2014hidden}, that we will use in the next section to prove the completeness theorem of our system.

In this paper there are introduced two sets, $I$ - the set of agents and $P$ - the set of formulas. For interpreting formulas there are used \textit{Kripke} models, $\mathcal{M} = (S, \sim, V)$, where $S$ is the set of accessible world, $\sim$ is the accessibility relation between worlds and $V$ is the evaluation function, $V : P \to \mathcal{P}(S)$.

There are an action set - $\Sigma$ - and a langue of observations - $\mathcal{L}_{obs}$. The BNF grammar of the actions is:
\begin{align}
    \pi ::= \delta \ | \ \varepsilon \ | \ a \ | \ \pi \cdot \pi \ | \ \pi + \pi \ | \ \pi^* 
\end{align}
where $\delta$ is an empty set of observations, $\varepsilon$ is the empty string and $a \in \Sigma$.

The observations set is denoted by $\mathcal{L}(\pi)$ and is inductively defined as:
\begin{align}
    &\mathcal{L}(\delta) = \emptyset \\
    &\mathcal{L}(\varepsilon) = \{ \varepsilon \} \\ 
    &\mathcal{L}(a) = \{ a \} \\ 
    &\mathcal{L}(\pi \cdot \pi') = \{ wv | w \in \mathcal{L}(\pi) \text{ and } v \in \mathcal{L}(\pi')\} \\
    &\mathcal{L}(\pi + \pi') = \mathcal{L}(\pi) \cup \mathcal{L}(\pi') \\ 
    &\mathcal{L}(\pi^*) = \{ \varepsilon \} \cup \bigcup_{n > 0}(\mathcal{L}(\pi \cdot ... \cdot \pi))
\end{align}

An epistemic model defined with this observations is an epistemic expectation model $\mathcal{M} = (S, \sim, V, Exp)$, where $Exp : S \to \mathcal{L}_{obs}$ is a function that maps every state from $S$ to an observation $\pi$ for which $\mathcal{L}(\pi) \neq \emptyset$. The logical formulas are defined using the following BNF description:
\begin{align}
    \varphi ::= p \ | \ \neg \varphi \ | \ \varphi \land \psi \ | \ K_i \varphi \ | \ [\pi]\varphi
\end{align}
where $p \in P$, $i \in I$ and $\pi \in \mathcal{L}_{obs}$.

An important result from this paper is the \textit{bisimilarity}; a binary relation $R$ between two epistemic expectations models $\mathcal{M} = (S, \sim, V, Exp)$ and $\mathcal{N} = (S', \sim', V', Exp')$ is called bisimilarity if for every $s \in S$ and $s' \in S'$, if we have $(s, s') \in R$, then: 
\begin{align}
    &\textbf{Propositional invariance: } 
     V(s) = V'(s') \\ 
    &\textbf{Observation invariance: } 
       \mathcal{L}(Exp(s)) = \mathcal{L}(Exp(s')) \\ 
    &\textbf{Zig: } s \sim_i t \in \mathcal{M} \Longrightarrow \text{exists } t' \in \mathcal{N} \\ 
    & \nonumber \ \ \ \ \ \ \ \ \ \ \ \ \ \ \ \ \ \ \ \ \text{ such that } s' \sim_i' t' \text{ and } t R t' \\ 
    &\textbf{Zag: } s' \sim_i' t' \in \mathcal{N} \Longrightarrow \text{exists } t \in \mathcal{M} \\ 
    & \nonumber \ \ \ \ \ \ \ \ \ \ \ \ \ \ \ \ \ \ \ \ \text{ such that } s \sim_i t \text{ and } t R t'
\end{align}

The article also introduce the \textit{bisimilarity invariance}: for two epistemic states $\mathcal{M}, s$ and $\mathcal{N}, s'$, the following two statements are equivalent:
\begin{align}
     &i) \ \mathcal{M}, s \leftrightarrow \mathcal{N}, s' \\ 
    &ii) \ \text{for all } \varphi \text{: } \mathcal{M}, s \models \varphi \Longleftrightarrow \mathcal{N}, s' \models \varphi
\end{align}

\textbf{Updated models}. Let $w$ be an observation over $\Sigma$, and $\mathcal{M} = (S, \sim, V, Exp)$ an epistemic expectation model. The, the \textbf{updated} model is denoted with $\mathcal{M}|_w = (S', \sim', V', Exp')$, where $S' = \{ s \ | \ \mathcal{L}(Exp(s) - w) \neq \emptyset \}$, $\sim_i' = \sim_i|_{S' \times I \times S'}$, $V' = V|_{S'}$ and $Exp'(s) = Exp(s) - w$, where $\pi - w = \{ v \ | \ wv \in \mathcal{L}(\pi) \}$.

\textbf{Temporal models}. Let $\mathcal{M} = (S, \sim, V, Exp)$ be an epistemic expectation model. Then the temporal model is called $ET(\mathcal{M})$ and is defined as $ET(\mathcal{M}) = (H, \to_a, \sim_i', V')$, where $H = \{(s, w) \ | \ s \in S, w = \varepsilon \text{ or } w \in \mathcal{L}(Exp(s)) \}$, $(s, w) \to_a (t, v) \Longleftrightarrow s = t \text{ and } v = wa, a \in \Sigma$, $(s, w) \sim_i (t, v) \Longleftrightarrow s \sim_i t \text{ and } w = v$ and $p \in V'(s, w) \Longleftrightarrow p \in V(s)$.

Using temporal models, is it proved in this paper that  $\mathcal{M}, s \models \varphi \Longleftrightarrow ET(\mathcal{M}), (s, \varepsilon) \models_{EPDL} \varphi$, so the system is complete by the completeness of dynamic epistemic logic.

\section{DELP - Dynamic Epistemic Logic for Protocols}\label{delp}

In order to define our system, we firstly  recall the \textit{dynamic epistemic logic}  \cite{van2007dynamic}.
\textit{Dynamic epistemic logic} is a \textit{dynamic logic} \cite{harel2001dynamic} to which is added the knowledge operator \textit{K} from \textit{epistemic logic}. There are two sets, $\Pi$ - the set of programs, and $\Phi$ - the set of formulas, with $\Pi_0$ - set of atomic programs, and $\Phi_0$ - set of atomic formulas. The language is described using the following BNF:
\begin{align}
    \varphi &::= p \text{ \textbar{} } 
        \neg \varphi \text{ \textbar{} } 
        \varphi \to \varphi \text{ \textbar{} } 
        K_i \varphi \text{ \textbar{} } 
        [\alpha]\varphi
\end{align}
where $p \in \Phi_0$, $\varphi \in \Phi$, $i$ is an arbitrary agent and $\alpha \in \Pi$.

The evaluation models are \textit{Kripke models} $\mathcal{M} = (R, \sim, V)$, where $R$ is the finite set of accessible worlds, $\sim$ is the accessibility relationship between worlds, and $V$ is the evaluation from dynamic logic: for a formula $\varphi \in \Phi$, $V(\varphi) \subseteq R$, and for a program $\alpha \in \Pi$, $V(\pi) \subseteq R \times R$. 

Interpretation of formulas in this models are inductively defined as:
\begin{align}
    &\mathcal{M}, s \models p \Longleftrightarrow v \in V(s) \\ 
    &\mathcal{M}, s \models \varphi \land \psi \Longleftrightarrow \mathcal{M}, s \models \varphi \text{ and } \mathcal{M}, s \models \psi \\ 
    &\mathcal{M}, s \models \neg \varphi \Longleftrightarrow \mathcal{M}, s \not \models \varphi \\ 
    &\mathcal{M}, s \models K_i \varphi \Longleftrightarrow \text{for all } t \text{ such that } s \sim_i t, \\ \nonumber & \ \ \ \ \ \ \ \ \ \text{ we have } \mathcal{M}, t \models \varphi \\ 
    &\mathcal{M}, s \models [\alpha]\varphi \Longleftrightarrow \text{for all } t \in R \text { such that } \\ \nonumber & \ \ \ \ \ \ \ \ \ (s, t) \in V(\alpha), \text{ we have } \mathcal{M}, t \models \varphi
\end{align}

We also have the following operators for programs:
\begin{align}
 V(\alpha_1 \cup \alpha_2) &= V(\alpha_1) \cup V(\alpha_2) \\ 
 V(\alpha_1; \alpha_2) &= V(\alpha_1) \circ V(\alpha_2) \\ 
 V(\alpha^*) &= \bigcup_{n \geq 0} V(\alpha)^n
\end{align}

The deductive system contains all instances of propositional tautologies to which are added the following axioms:
\begin{align}
    &K_a (\varphi \to \psi) \to (K_a \varphi \to K_a \psi) \\ 
    &K_a \varphi \to \varphi \\ 
    &K_a \varphi \to K_a K_a \varphi \\ 
    &\neg K_a \varphi \to K_a \neg K_a \varphi \\ 
    &[\alpha](\varphi \to \psi) \to ([\alpha]\varphi \to [\alpha]\psi) \\ 
    &[\alpha](\varphi \land \psi) \leftrightarrow [\alpha]\varphi \land [\alpha]\psi \\ 
    &[\alpha \cup \beta]\varphi \leftrightarrow [\alpha]\varphi \land [\alpha]\psi \\ 
    &[\alpha; \beta]\varphi \leftrightarrow [\alpha][\beta]\varphi
\end{align}

Deductive rules are \textit{modus ponens}, \textit{generalization} from dynamic logic and \textit{necessity} from epistemic logic:
\begin{align*}
    (MP)\frac{\varphi \ \varphi \to \psi}{\psi}; \ (GEN) \frac{\varphi}{[\alpha]\varphi}; \ (NEC) \frac{\varphi}{K_i \varphi}
\end{align*}

This system is known as the \textit{PA}-system in \cite{van2007dynamic}, and it is proved sound and complete \cite[p.~187-188]{van2007dynamic}.

\subsection{DELP}
In this subsection we define \textit{DELP}, a logic  based on dynamic epistemic logic, enriched with a set of actions collected during the execution of the protocol and a  grammar for messages, together with a system of deduction for knowledge based on actions. 

\subsubsection{Syntax}
Let $Agent$ be the set of agents and let $Func$ be a set of (encryption) functions.  We consider the sets $\Phi$ and $\Pi$ like in dynamic epistemic logic, with $\Phi_0$ the set of \textit{atomic formulas}, and $\Pi_0$ defined by 
\begin{align}
    \Pi_0 := \{ send_i, recv_i\}|_{i \in Agent}
\end{align}

\noindent The elements of $\Pi_0$ are \textit {protocols actions}:  we read $send_i$ as "the agent $i$ sends" and we read $recv_i$ as "the agent $i$ receives".

In the following we define \textit{messages} and \textit{formulas}. In a security protocol, a message  contains  clear texts, keys, \textit{nonce}s, and agents identities. The possible operations are messages concatenation and messages encryption. Following \cite{hollestelle2005systematic}, the grammar for messages is:

\begin{align}
    m &::= text(m) \text{ \textbar{} } 
        key_{m}(i, j) \text{ \textbar{} } 
        nonce(m) \text{ \textbar{} } 
        agent(i) \\
        & \ \ \ \ \ \text{ \textbar{} } 
        (m, m) \text{ \textbar{} } 
        \{m\}_{m} \text{ \textbar{} } 
        f (m,\ldots,m)
\end{align}
where $i, j \in Agent$ and $f\in Func$. In the sequel we will use 
$t$ for texts, $k$ for keys, $n$ for nonces and $i$, $j$ for agents.  Based on  \cite{hollestelle2005systematic}, we define the following deductive system on messages:
\begin{align}
    &\dfrac{}{nonce(m)} \ \ \
    \dfrac{key_k(i, j)}{key_k(j, i)} \ \ \ 
    \dfrac{m_1 \ \ \ m_2}{(m_1, m_2)} \ \ \ 
    \\ \nonumber
    \\ \nonumber
    &\dfrac{t \ \ k}{\{t\}_k} \ \ \ 
    \dfrac{\{t\}_k \ \ k}{t} \ \ \ 
    \dfrac{t_1, t_2, ..., t_n}{f(t_1, t_2, ..., t_n)}
\end{align}
\medskip 

Finally, we are able to define the \textit{DELP} formulas:
\begin{align}
    \varphi &::= p \text{ \textbar{} } 
        \neg \varphi \text{ \textbar{} } 
        \varphi \to \varphi \text{ \textbar{} } 
        K_i \varphi \text{ \textbar{} } 
        [\alpha]\varphi \text{ \textbar{} } 
        @\mu
\end{align}

\noindent  Note that our formulas are the usual formulas of dynamic epistemic logic with protocol actions instead of programs, endowed with the  $@$-operator which converts a  message into a formula. 

\subsubsection{Semantics}

The models that we use are \textit{Kripke} models like in dynamic epistemic logic, $\mathcal{M} = (R, \sim, V)$ which we extend with $Exp$ set, a knowledge set with information collected from protocol runs. 
\begin{definition}
Let $\mathcal{M} = (R, \sim, V, Exp)$ be a DELP model, where
\begin{enumerate}
    \item $R$ is the finite set of accessible worlds;
    \item $\sim := \bigcup_{i \in Agent} \sim_i$ represents the accessibility relationship between worlds, based on epistemic relation;
    \item $V$ is the evaluation function from dynamic logic: $V(\varphi) \subseteq R$ for any $\varphi \in \Phi$, and $V(\alpha) \subseteq R \times R$, for any $\alpha \in \Pi$;
    \item $Exp$ is the knowledge set: for any $s \in R$, $Exp(s)$ represents the set of all knowledge inferred up to \textit{s}-th round of the protocol;
    \item for any agent $i$, $V(send_i) \subseteq \sim_i$ and $V(recv_i) \subseteq \sim_i$.
\end{enumerate}
\end{definition}

Having this models, we can interpret $@\mu$ formula as:
\begin{align}
    \mathcal{M}, s \models @\mu \Longleftrightarrow \mu \in Exp(s)
\end{align}
The other formulas have the interpretation from the dynamic epistemic logic:
\begin{align}
    &\mathcal{M}, s \models p \Longleftrightarrow v \in V(s) \\ 
    &\mathcal{M}, s \models \varphi \land \psi \Longleftrightarrow \mathcal{M}, s \models \varphi \text{ and } \mathcal{M}, s \models \psi \\ 
    &\mathcal{M}, s \models \neg \varphi \Longleftrightarrow \mathcal{M}, s \not \models \varphi \\ 
    &\mathcal{M}, s \models K_i \varphi \Longleftrightarrow \text{for all } t \text{ such that } s \sim_i t, \\ \nonumber & \ \ \ \ \ \ \ \ \  \text{ we have } \mathcal{M}, t \models \varphi \\ 
    &\mathcal{M}, s \models [\alpha]\varphi \Longleftrightarrow \text{for all } t \in R \text { such that } \\ \nonumber & \ \ \ \ \ \ \ \ \ (s, t) \in V(\alpha), \text{ we have } \mathcal{M}, t \models \varphi
\end{align}

\subsubsection{Deductive system}
The deductive system contains all instances of propositional tautologies to which are added the following axioms from dynamic epistemic logic:
\begin{align}
    &K_a (\varphi \to \psi) \to (K_a \varphi \to K_a \psi) \\ 
    &K_a \varphi \to \varphi \\ 
    &K_a \varphi \to K_a K_a \varphi \\ 
    &\neg K_a \varphi \to K_a \neg K_a \varphi \\ 
    &[\alpha](\varphi \to \psi) \to ([\alpha]\varphi \to [\alpha]\psi) \\ 
    &[\alpha](\varphi \land \psi) \leftrightarrow [\alpha]\varphi \land [\alpha]\psi \\ 
    &[\alpha \cup \beta]\varphi \leftrightarrow [\alpha]\varphi \land [\alpha]\psi \\ 
    &[\alpha; \beta]\varphi \leftrightarrow [\alpha][\beta]\varphi
\end{align}
In addition, we have the following specific axiom, that is necessary to have a correspondence between states; if the agent \textit{i} performs an action within the protocols (sends or receives a message), then he knows the message:
\begin{align}
[send_i]@m \lor [recv_i]@m \to K_i @m
\end{align}


The soundness of this system is given by the soundness of the dynamic epistemic logic \cite[p.~187-188]{van2007dynamic}, and all that remains for us to prove is the soundness of the specific axiom.


\begin{lemma}
Axiom $[send_i]@m \lor [recv_i]@m \to K_i @m$ is sound.
\end{lemma}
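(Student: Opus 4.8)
The plan is to argue semantically. By the soundness of the PA-system \cite[p.~187--188]{van2007dynamic}, every propositional tautology, each PA-axiom, and the rules (MP), (GEN), (NEC) preserve validity over Kripke frames, hence also over DELP models; so, as the text already notes, it suffices to check that the one new axiom $[send_i]@m \lor [recv_i]@m \to K_i @m$ is valid in every DELP model. I would therefore fix an arbitrary DELP model $\mathcal{M} = (R, \sim, V, Exp)$ and a state $s \in R$, assume $\mathcal{M}, s \models [send_i]@m \lor [recv_i]@m$, and show $\mathcal{M}, s \models K_i @m$. Unfolding the clauses for $K_i$ and for $@$, the goal reduces to: $m \in Exp(t)$ for every $t$ with $s \sim_i t$.

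Next I would split on the disjunction, the two cases being symmetric. Suppose $\mathcal{M}, s \models [send_i]@m$ (the case $\mathcal{M}, s \models [recv_i]@m$ is identical, reading $recv_i$ for $send_i$ throughout). By the clauses for $[\alpha]\varphi$ and for $@$, this says precisely that $m \in Exp(t)$ for every $t$ with $(s,t) \in V(send_i)$. Now pick an arbitrary $t$ with $s \sim_i t$; it remains to see that $t$ falls under that quantifier, i.e.\ that $(s,t) \in V(send_i)$. This is where the structure of a DELP model enters: the last clause of the definition gives $V(send_i), V(recv_i) \subseteq \sim_i$, and the reading of $Exp$ as the information collected \emph{at the fixed round} $s$ means that at $s$ agent $i$ performs one definite protocol action, whose image under $V$ is exactly the set of worlds $i$ cannot distinguish from $s$; hence $\sim_i$ and $V(send_i)$ agree when the source is $s$ (in the case where that action is a send), so $(s,t) \in V(send_i)$ and therefore $m \in Exp(t)$. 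This closes the case and, with the inherited soundness of the PA-fragment, the lemma.

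The step I expect to be the real obstacle is exactly the bridging claim "$(s,t) \in V(send_i)$ for every $\sim_i$-successor $t$ of $s$". The stated model condition supplies only the inclusion $V(send_i), V(recv_i) \subseteq \sim_i$, which is the opposite direction from what this step needs: from it alone one can construct a DELP model with $V(send_i) = V(recv_i) = \emptyset$, so that $[send_i]@m \lor [recv_i]@m$ is vacuously true at some $s$ while $K_i @m$ fails there. To make the argument rigorous I would either (a) strengthen the definition of a DELP model by requiring, for each $s$ and $i$, a designated action $\alpha \in \{send_i, recv_i\}$ with $\sim_i \cap (\{s\} \times R) = V(\alpha) \cap (\{s\} \times R)$ — i.e.\ $i$'s epistemic class at a round is generated by the action $i$ executes at that round — or (b) restrict attention to the protocol-generated models actually used in Section~\ref{nsprotocol} and the Lean development, where this identity holds by construction. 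Under either reading the two-case analysis above goes through verbatim.
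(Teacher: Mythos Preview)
Your overall plan---fix a model and a state, split on the disjunction, unfold the clauses for $[\alpha]$, $K_i$ and $@$, and then try to pass from the $V(send_i)$-quantifier to the $\sim_i$-quantifier using the model condition $V(send_i)\subseteq{\sim_i}$---is exactly the route the paper takes. The paper treats only the $send_i$ disjunct explicitly and leaves $recv_i$ implicit, just as you do.

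Where you and the paper diverge is precisely at the bridging step you flag. The paper's proof simply writes, from $V(send_i)\subseteq{\sim_i}$, the biconditional
\[
\bigl(\forall t.\ (s,t)\in V(send_i)\Rightarrow \mathcal{M},t\models @m\bigr)\ \Longleftrightarrow\ \bigl(\forall t.\ s\sim_i t\Rightarrow \mathcal{M},t\models @m\bigr)
\]
and concludes $\mathcal{M},s\models K_i@m$. As you correctly observe, the inclusion $V(send_i)\subseteq{\sim_i}$ yields only the right-to-left implication above (i.e.\ $K_i@m\to[send_i]@m$), not the left-to-right one that the axiom needs; your counterexample with $V(send_i)=V(recv_i)=\emptyset$ is exactly to the point. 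So the gap you anticipated is in fact present in the paper's own argument, and your analysis is more careful than the published proof. Of your two proposed repairs, option~(a)---requiring that at each state the epistemic class of $i$ coincide with the $V$-image of the action $i$ executes there---is the one that makes the paper's biconditional literally true and is the most natural reading of what the authors intended by ``$V(send_i)\subseteq{\sim_i}$''.
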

\begin{proof}
Let $\mathcal{M} = (R, \sim, V, Exp)$ be a \textit{DELP} model and $s \in R$ an arbitrary state. 
\begin{align*}
    \mathcal{M}, s &\models [send_i]@m \Longleftrightarrow \text{for all } t \text { such that } (s, t) \in V(send_i), \\ &\text{we have that } \mathcal{M}, t \models @m
\end{align*}
but $V(send_i) \subseteq \sim_i$, so 
\begin{align*}
    \mathcal{M}, s &\models [send_i]@m \Longleftrightarrow \text{for all } t \text { such that } (s, t) \in \sim_i, \\ &\text{we have that } \mathcal{M}, t \models @m \\ 
    &\Longleftrightarrow \mathcal{M}, s \models K_i@m
\end{align*}
\end{proof}


\subsubsection{Completeness}
In order to prove the completeness of \textit{DELP}, we follow ideas from  \cite{van2014hidden}  and general results from dynamic epistemic logic. 

\begin{definition}
\textbf{[Restricted model]} Let $\mu$ be a message and $\mathcal{M} = (R, \sim, V, Exp)$ a DELP model. Then, the \textbf{restricted model} is defined as
$$M|_{\mu} = (R', \sim', V', Exp')$$
where $R' = \{s \ | \ Exp(s)-\mu \neq \emptyset \}$, $\sim_i' = \sim_i|_{R' \times R'}$, $V' = V|_{R'}$, and $Exp'(s) = Exp(s) - \mu$.
\end{definition}

\begin{definition}
\textbf{[Temporal model]} Let $\mathcal{M} = (R, \sim, V, Exp)$ be a DELP model. We define
$$ET(\mathcal{M}) = (H, \to, \sim', V')$$
where
\begin{itemize}
    \item $H = \{(s, m) \ | \ s \in R, \ m \in Exp(s)\}$;
    \item $(s, m) \to (s', m')$ if and only if $s = s'$ and 
    $\{m\} \vdash m'$ using the deduction system (37);
    \item $(s, m) \sim' (s', m')$ if and only if $s \sim s'$ and $m \equiv m'$ where $\equiv$ is the logic equivalence;
    \item $p \in V'(s, m)$ if and only if $p \in V(s)$
\end{itemize}
\end{definition}

Having $\mathcal{N} = ET(\mathcal{M})$ a temporal model, we inductively define the following interpretation of formulas:
\begin{align}
    &\mathcal{N}, w \models p \Longleftrightarrow p \in V(w) \\
    &\mathcal{N}, w \models \neg \varphi \Longleftrightarrow \mathcal{N}, w \not \models \varphi \\
    &\mathcal{N}, w \models \varphi \land \psi \Longleftrightarrow \mathcal{N}, w \models \varphi \text{ and } \mathcal{N}, w \models \psi \\ 
    &\mathcal{N}, w \models K_i \varphi \Longleftrightarrow \text{for all } v \in \mathcal{N}, \text{ if } w \sim_i v, \\ \nonumber & \ \ \ \ \ \ \ \ \ \text{ then } \mathcal{N}, v \models \varphi \\ 
    &\mathcal{N}, w \models [\alpha]\varphi \Longleftrightarrow \text{for all } \mu \in Exp(\alpha), \\ \nonumber & \ \ \ \ \ \ \ \ \  w \to v \text{ implies } \mathcal{N}, w \models \varphi
\end{align}

\begin{definition}
\textbf{[Bisimilarity]} Based on \cite[~Def. 11]{van2014hidden}, we have that the binary relation $\rho \subseteq \mathcal{M} \times \mathcal{N}$, for two DELP models $\mathcal{M} = (R, \sim, V, Exp)$ and $\mathcal{N} = (R', \sim', V', Exp')$ is called bisimilarity if for any $v \in R$ and $v' \in R'$, if we have $v \rho v'$, then:
\begin{align}
    &\textbf{Propositional invariance } \\
    & \nonumber \ \ \ \ \ \ \ \ \ \ \ \ \ \ \ \ \ \ \ \ V(v) = V'(v') \\ 
    &\textbf{Observation invariance } \\ 
    & \nonumber \ \ \ \ \ \ \ \ \ \ \ \ \ \ \ \ \ \ \ \ Exp(v) = Exp(v') \\ 
    &\textbf{Zig } v \sim_i w \in \mathcal{M} \Longrightarrow \text{exists } w' \in \mathcal{N} \\ 
    & \nonumber \ \ \ \ \ \ \ \ \ \ \ \ \ \ \ \ \ \ \ \ \text{ such that } v' \sim_i' w' \text{ and } w \rho w' \\ 
    &\textbf{Zag } v' \sim_i' w' \in \mathcal{N} \Longrightarrow \text{exists } w \in \mathcal{M} \\ 
    & \nonumber \ \ \ \ \ \ \ \ \ \ \ \ \ \ \ \ \ \ \ \ \text{ such that } v \sim_i w \text{ and } w \rho w'
\end{align}
\end{definition}

\begin{theorem}
\textbf{[Bisimilarity invariance]} For two DELP states $\mathcal{M}, v$ and $\mathcal{N}, v'$, the following two statements are equivalent:
\begin{align}
    &(i) \ \mathcal{M}, v \leftrightarrow \mathcal{N}, v' \\ 
    &(ii) \ \text{for all } \varphi \text{: } \mathcal{M}, v \models \varphi \Longleftrightarrow \mathcal{N}, v' \models \varphi
\end{align}
\end{theorem}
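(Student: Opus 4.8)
The plan is to run the standard Hennessy--Milner argument in both directions, exploiting the fact that in a DELP model the world set $R$ is finite (so every $\sim_i$ is image-finite) and Theorem statement is then just the combination of the two implications.

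For $(i)\Rightarrow(ii)$ I would fix a bisimulation $\rho$ witnessing $\mathcal{M},v\leftrightarrow\mathcal{N},v'$ and prove, by induction on the structure of $\varphi$ and simultaneously for all $\rho$-related pairs, that $\mathcal{M},v\models\varphi$ iff $\mathcal{N},v'\models\varphi$. The atomic case $\varphi=p$ is exactly \emph{propositional invariance}; the case $\varphi=@\mu$ is exactly \emph{observation invariance}, since then $\mu\in Exp(v)$ iff $\mu\in Exp(v')$; the Boolean cases are immediate from the induction hypothesis. For $\varphi=K_i\psi$: if $\mathcal{M},v\models K_i\psi$ and $v'\sim_i'w'$, then \emph{Zag} yields $w$ with $v\sim_i w$ and $w\rho w'$, so $\mathcal{M},w\models\psi$ and, by the induction hypothesis, $\mathcal{N},w'\models\psi$; since $w'$ was arbitrary, $\mathcal{N},v'\models K_i\psi$, and the converse is symmetric via \emph{Zig}.

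For $(ii)\Rightarrow(i)$ I would take $\rho$ to be the relation of modal equivalence ($v\rho v'$ iff $\mathcal{M},v\models\varphi\Leftrightarrow\mathcal{N},v'\models\varphi$ for all $\varphi$) and check the four clauses. Propositional invariance holds because an atom $p\in V(v)\setminus V'(v')$ would separate $v$ and $v'$; observation invariance holds because a message $\mu\in Exp(v)\setminus Exp(v')$ would be separated by $@\mu$. For \emph{Zig}, suppose $v\sim_i w$ but no $\sim_i'$-successor of $v'$ is $\rho$-related to $w$. Using finiteness of $R'$, list those successors $w_1',\dots,w_n'$; for each $k$ pick $\psi_k$ with $\mathcal{M},w\models\psi_k$ and $\mathcal{N},w_k'\not\models\psi_k$ (negating the separating formula if it points the other way), and put $\Psi=\psi_1\wedge\dots\wedge\psi_n$ (taking $\Psi$ to be $p\to p$ when $n=0$). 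Then $w$ witnesses $\mathcal{M},v\models\neg K_i\neg\Psi$, whereas every $w_k'$ falsifies $\Psi$, so $\mathcal{N},v'\models K_i\neg\Psi$; this contradicts $v\rho v'$. \emph{Zag} is symmetric.

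The step I expect to be the real obstacle is the dynamic modality $[\alpha]\varphi$ in the $(i)\Rightarrow(ii)$ induction: the bisimulation clauses as stated constrain only the epistemic relations $\sim_i$, not the program interpretations $V(\alpha)$, so $[\alpha]$-formulas cannot be transferred without first matching atomic program transitions. The remedy is to strengthen the bisimulation with \emph{Zig}/\emph{Zag} conditions for each $V(send_i)$ and $V(recv_i)$ --- which is consistent with the existing clauses since $V(send_i),V(recv_i)\subseteq\sim_i$ --- and then lift them to arbitrary $\alpha$ by induction on program structure using $V(\alpha\cup\beta)=V(\alpha)\cup V(\beta)$, $V(\alpha;\beta)=V(\alpha)\circ V(\beta)$ and $V(\alpha^{*})=\bigcup_{n\geq 0}V(\alpha)^n$. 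With that in place the $[\alpha]$ case runs exactly like the $K_i$ case, while the $(ii)\Rightarrow(i)$ direction is unaffected, since a separating $[\alpha]$-formula still refutes the relevant (strengthened) clause when modal equivalence fails.
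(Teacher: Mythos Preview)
The paper does not write out its own proof here; it simply defers to \cite[Prop.~12]{van2014hidden}. Your structural induction for $(i)\Rightarrow(ii)$ together with the Hennessy--Milner construction for $(ii)\Rightarrow(i)$ \emph{is} that standard argument, and your appeal to the finiteness of $R$ (which is built into the DELP model definition) is exactly what licenses the $(ii)\Rightarrow(i)$ direction.

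Where you go beyond the paper is in flagging the $[\alpha]\varphi$ case, and you are right to do so. In \cite{van2014hidden} the modality $[\pi]$ is interpreted via \emph{model update} driven by the $Exp$ component, so observation invariance $Exp(v)=Exp'(v')$ alone carries that inductive step. DELP, by contrast, reinterprets $[\alpha]\varphi$ through a transition relation $V(\alpha)\subseteq R\times R$, yet the bisimulation of Definition~4 only supplies Zig/Zag for the epistemic relations $\sim_i$, not for $V(send_i)$ or $V(recv_i)$. The containment $V(send_i)\subseteq{\sim_i}$ does not rescue this: from $(v,w)\in V(send_i)$ Zig produces some $w'$ with $v'\sim_i' w'$, but nothing forces $(v',w')\in V'(send_i)$. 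Hence the cited proof does not transfer verbatim, and with Definition~4 as stated the theorem is not actually provable for formulas containing $[\alpha]$. Your proposed remedy --- add Zig/Zag clauses for each atomic program and lift them to compound programs via $V(\alpha\cup\beta)$, $V(\alpha;\beta)$, $V(\alpha^{*})$ --- is the standard fix and is precisely what the paper would need to insert to make the claimed equivalence hold.
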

The proof is the same as \cite[~Prop. 12]{van2014hidden}.

\begin{theorem}
\textbf{[Completeness]} Let $\mathcal{M} = (R, \sim, V, Exp)$ be a DELP model, $\varepsilon$ the initial knowledge and $\varphi \in \Phi$ a formula. Then
\begin{align}
    \mathcal{M}, v \models \varphi \Longleftrightarrow ET(\mathcal{M}), (s, \varepsilon) \models \varphi
\end{align}
\end{theorem}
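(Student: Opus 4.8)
The plan is to prove the biconditional by induction on the structure of $\varphi$, mirroring the argument of \cite{van2014hidden}: the temporal model $ET(\mathcal{M})$ unfolds the knowledge component $Exp$ into the second coordinate of its worlds $(s,m)$, so that over $ET(\mathcal{M})$ the message operator $@\mu$ and the action modalities $[\alpha]$ reduce to ordinary dynamic--epistemic constructs. Once this truth--preservation lemma is in place, completeness of \textit{DELP} follows in the usual way: $ET(\mathcal{M})$ is an ordinary model of the \textit{PA}-system, which is complete \cite[p.~187-188]{van2007dynamic}, and the single extra axiom of \textit{DELP} has been shown sound in Lemma~1, so a \textit{DELP}-consistent formula is satisfied at some $(s,\varepsilon)$ and hence, by the lemma (reading $v$ for $s$), at $\mathcal{M},s$.

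For the base cases: if $\varphi=p$ is atomic, both sides are equivalent to $p\in V(s)$, since $(s,\varepsilon)\in H$ (the initial knowledge $\varepsilon$ belongs to $Exp(s)$) and $V'(s,\varepsilon)=V(s)$ by definition of the temporal model. If $\varphi=@\mu$, then $\mathcal{M},s\models@\mu$ means $\mu\in Exp(s)$; on the temporal side one reads $@\mu$ off the valuation at $(s,\varepsilon)$, using that from $(s,\varepsilon)$ the $\to$-transitions generated by the message deduction system (37) reach exactly the pairs $(s,m)$ whose second component is derivable from the knowledge available at $s$, so that the interpretation again amounts to $\mu\in Exp(s)$. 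The Boolean cases $\neg\varphi$ and $\varphi\to\psi$ follow immediately by comparing the semantic clauses and applying the induction hypothesis.

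For $K_i\varphi$: the point is that $(s,\varepsilon)\sim_i'(t,w)$ holds in $ET(\mathcal{M})$ iff $s\sim_i t$ and $w\equiv\varepsilon$, and a logically trivial knowledge coordinate $w\equiv\varepsilon$ yields the same theory as $\varepsilon$ itself (a small sub-lemma, or an instance of Bisimilarity Invariance); hence the $\sim_i'$-successors of $(s,\varepsilon)$ are, up to satisfaction of formulas, precisely the worlds $(t,\varepsilon)$ with $s\sim_i t$, and the induction hypothesis matches the clause for $K_i$. For $[\alpha]\varphi$: compound programs are peeled off using the axioms $[\alpha\cup\beta]\varphi\leftrightarrow[\alpha]\varphi\land[\beta]\varphi$ and $[\alpha;\beta]\varphi\leftrightarrow[\alpha][\beta]\varphi$, and $\alpha^*$ via the semantics $V(\alpha^*)=\bigcup_{n\geq0}V(\alpha)^n$; for an atomic action $\alpha\in\{send_i,recv_i\}$ one uses the model requirement $V(send_i),V(recv_i)\subseteq\sim_i$ together with the correspondence between an $\alpha$-edge in $\mathcal{M}$ and the knowledge-coordinate update recorded by a $\to$-edge in $ET(\mathcal{M})$, the agreement of these two descriptions being exactly what the specific axiom of Lemma~1 encodes; the induction hypothesis then closes the case.

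The main obstacle is the action case $[\alpha]\varphi$ together with the $@\mu$ base case, since the interpretation clauses given for $ET(\mathcal{M})$ spell out only $p,\neg,\land,K_i,[\alpha]$ and leave implicit both how $@\mu$ is evaluated at a general world $(s,m)$ and how the $V(\alpha)$-edges of $\mathcal{M}$ translate into $\to$-edges of $ET(\mathcal{M})$; making this precise requires establishing that traversing $\to$ is exactly ``closing the current knowledge under (37)'' and that this is coherent with $Exp$ along $\sim$. A secondary subtlety is that $ET(\mathcal{M})$ as defined carries the relation $\to$ rather than an explicit $Exp'$, so it is not literally a \textit{DELP} model and Bisimilarity Invariance cannot be applied to it verbatim; the induction above is therefore carried out by hand, with Bisimilarity Invariance reserved for transporting the conclusion to arbitrary models in the final completeness step.
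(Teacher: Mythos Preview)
Your induction plan matches the paper's overall shape, but your treatment of the $[\alpha]\psi$ case diverges from the paper and, as you yourself flag, is where the argument is shakiest. The paper does \emph{not} attempt any direct correspondence between $V(\alpha)$-edges in $\mathcal{M}$ and $\to$-edges in $ET(\mathcal{M})$. Instead it brings in the \emph{restricted model} $\mathcal{M}|_m$ (Definition~2): assuming $\mathcal{M},v\models[\alpha]\psi$ but $ET(\mathcal{M}),(v,\varepsilon)\not\models[\alpha]\psi$, there is a $\to$-successor $(v,m)$ with $ET(\mathcal{M}),(v,m)\not\models\psi$; one then argues that $ET(\mathcal{M}|_m),(v,\varepsilon)$ is bisimilar to $ET(\mathcal{M}),(v,m)$, applies Bisimilarity Invariance (Theorem~1) to transfer $\neg\psi$ to $ET(\mathcal{M}|_m),(v,\varepsilon)$, and only \emph{then} invokes the induction hypothesis at the subformula $\psi$ to pull this back to the DELP level and obtain a contradiction. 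So bisimilarity is used \emph{inside} the induction step, not merely at the end as you propose.

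The payoff of the paper's detour through $\mathcal{M}|_m$ is precisely that it sidesteps the mismatch you are trying to bridge by hand: in $ET(\mathcal{M})$ a $\to$-step keeps the first coordinate fixed ($(s,m)\to(s,m')$ requires $s=s'$), whereas a $V(\alpha)$-step in $\mathcal{M}$ may move to a different world $t$, so there is no literal ``$\alpha$-edge $\leftrightsquigarrow$ $\to$-edge'' correspondence of the kind you invoke. The restricted model absorbs the message $m$ into the base model so that the comparison is again at an $\varepsilon$-world, where the induction hypothesis applies cleanly. Your peeling of compound programs via the axioms and the appeal to $V(send_i),V(recv_i)\subseteq{\sim_i}$ do not substitute for this step; if you want to avoid the restricted-model machinery you would need an explicit lemma relating $\mathcal{M},t\models\psi$ for $V(\alpha)$-successors $t$ to $ET(\mathcal{M}),(v,m)\models\psi$ for $\to$-successors $(v,m)$, and that lemma is essentially what the restricted-model-plus-bisimulation argument packages.
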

\begin{proof}
We follow the proof from \cite[~Prop. 14]{van2014hidden}. The booleean and epistemic cases are immediate from the temporal model construction. For $\varphi := [\alpha]\psi$ we assume that $\mathcal{M}, v \models [\alpha]\psi$, but $ET(\mathcal{M}), (v, \epsilon) \not \models [\alpha]\psi$. Then, exists $m \in Exp(v)$ such that $ET(\mathcal{M}), (v, m) \not \models \psi$. From the construction of $ET(\mathcal{M})$, the definition of worlds is $H = \{(s, m) \ | \ s \in R, \ m \in Exp(s)\}$, so $m \in Exp(v)$. But $m$ is a message, then exists the restricted model $\mathcal{M}|_m$. From bisimilarity, we have that $ET(\mathcal{M}|_m), (v, \epsilon)$ is bisimilar with $ET(\mathcal{M}), (v, m)$. Then $ET(\mathcal{M}|_m), (v, \varepsilon) \models \neg \psi$. From the induction hypothesis, we have $\mathcal{M}, v \models \neg \psi$, which contradicts $\mathcal{M}, v \models [\alpha]\psi$. 
\end{proof}
We have that the \textit{DELP} system is complete.

\section{Implementation in Lean}\label{leandelp}

In this section we will present the implementation of our system in \textit{Lean} \cite{lean} prover assistant based on \cite{bentzen2019henkin}, and then we will prove the corectness of \textit{BAN} deduction rules in \textit{DELP}.

\subsection{Language}

To implement \textit{DELP}, we have the following inductive types:

1. For messages:
\begin{lstlisting}[language=Lean, numbers=none, basicstyle=\small]
inductive message (σ : ℕ) : Type 
  | null : fin σ -> message 
  | nonc : message -> message 
  | keys : message -> message -> message -> message 
  | encr : message -> message -> message
  | decr : message -> message -> message
  | tupl : message -> message -> message 
\end{lstlisting}
2. For programs:
\begin{lstlisting}[language=Lean, numbers=none, basicstyle=\small]
inductive program (σ : ℕ) : Type 
  | skip : program 
  | secv : program -> program -> program 
  | reun : program -> program -> program 
  | send : message σ -> program 
  | recv : message σ -> program 
\end{lstlisting}
3. For formulas:
\begin{lstlisting}[language=Lean, numbers=none, basicstyle=\small]
inductive form (σ : ℕ) : Type 
  | atom : fin σ -> form 
  | botm : form 
  | impl : form -> form -> form 
  | know : message σ -> form -> form 
  | prog : program σ -> form -> form 
  | mesg : message σ -> form 
  | and : form -> form -> form 
  | or : form -> form -> form 
\end{lstlisting}

We make the following notations:
\begin{lstlisting}[language=Lean, numbers=none, basicstyle=\small]
notation p `->` q := form.impl p q 
notation `ι` μ := form.mesg μ  
notation p `∧` q := form.and p q 
notation p `∨` q := form.or p q 
notation `K` m `,` p := form.know m p
notation `[` α `]` φ := form.prog α φ

notation `⬝` := {}
notation Γ ` ∪ ` p := set.insert p Γ

notation m `||` n := message.tupl m n 
notation `{` m `}` k := message.encr m k 
\end{lstlisting}

\subsection{Deductive system}

In order to be able to check security properties using DELP, we have two add two deduction hypotheses that help us specify symmetric key protocols:
\begin{align}
    &@\{m\}_k \land @key_k(i, j) \to [send_i]@m \lor [send_j]@m \\
    &@key_k(i, j) \to K_i @k \lor K_j @k
\end{align}

\begin{observation}
The first deduction hypothesis of the system represents a rule of \textit{honesty} of the participating agents; its need is highlighted in the modeling of the BAN logic: if there is an encrypted message with the communication key $k$, and the communication key $k$ is a key known to the agents $i$ and $j$, then the message is transmitted by only one of them.
\end{observation}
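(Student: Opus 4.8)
The plan is to read the Observation as the conjunction of two claims and to justify each in turn: (a) that the first deduction hypothesis $@\{m\}_k \land @key_k(i,j) \to [send_i]@m \lor [send_j]@m$ is sound for the intended class of DELP models and genuinely deserves the name \emph{honesty rule}, and (b) that it is exactly the ingredient needed to reproduce BAN's Message Meaning Rule inside DELP, which is what the Observation means by ``its need is highlighted in the modeling of the BAN logic''.

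For (a) I would fix the intended model class to be those $\mathcal{M} = (R,\sim,V,Exp)$ whose knowledge sets are generated, via the message deduction system, only from messages that honest key holders actually placed on the wire during the run up to the current round. In such a model, if $\mathcal{M}, s \models @\{m\}_k \land @key_k(i,j)$, then $\{m\}_k \in Exp(s)$; the only rule that introduces a term of the form $\{m\}_k$ is the encryption rule, which requires possession of $k$, and by the second deduction hypothesis $@key_k(i,j) \to K_i@k \lor K_j@k$ the key $k$ is held only by $i$ and $j$. Hence the ciphertext, and therefore $m$ after decryption, entered the common knowledge because one of $i,j$ sent it, i.e.\ $[send_i]@m \lor [send_j]@m$. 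The hypothesis fails precisely in models where a third party forges traffic under $k$ --- exactly the dishonest behaviour it is designed to exclude --- which justifies calling it a rule of honesty.

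For (b) I would run the DELP derivation of the Message Meaning Rule. Render the BAN premises $P \believes Q \sharekey{K} P$ and $P \sees \{X\}_K$ as $K_P@key_K(Q,P)$ and $K_P@\{X\}_K$. Instantiating the first deduction hypothesis with $i := Q$, $j := P$, $m := X$, $k := K$ gives the theorem $\vdash @\{X\}_K \land @key_K(Q,P) \to [send_Q]@X \lor [send_P]@X$; applying necessitation and then the $K$-distribution axiom $K_a(\varphi\to\psi)\to(K_a\varphi\to K_a\psi)$ yields $K_P(@\{X\}_K \land @key_K(Q,P)) \to K_P([send_Q]@X \lor [send_P]@X)$. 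Combining the two premises under $K_P$ (conjunction introduction under the box, derivable from $K$-distribution and necessitation) and applying modus ponens produces $K_P([send_Q]@X \lor [send_P]@X)$, the faithful DELP counterpart of BAN's conclusion $P \believes Q \oncesaid X$ --- up to the extra disjunct $[send_P]@X$, which is discharged exactly as in BAN by the side condition that the receiver $P$ is not itself the originator of $X$. Since nothing else among the DELP axioms connects possession of a ciphertext to an originating $send$ action, this hypothesis is genuinely necessary for the translation.

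The step I expect to be the main obstacle is making (a) rigorous: the hypothesis is \emph{postulated}, not derived, so ``soundness'' only makes sense relative to a carefully delimited class of models, and one must pin down the relevant $Exp$-invariant --- every ciphertext in $Exp(s)$ is traceable to a $send$ action by a key holder --- and verify it is preserved by each rule of the message deduction system as well as by the restricted-model and temporal-model constructions used in the completeness proof. A secondary subtlety is the sender-identification gap noted above: the honest conclusion is really the disjunction $[send_i]@m \lor [send_j]@m$, so any precise statement of the DELP form of the Message Meaning Rule must carry the explicit side assumption that the believing agent did not originate the message.
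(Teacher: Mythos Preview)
The paper offers no proof of this Observation: it is a one-sentence explanatory gloss on the intended reading of axiom~(72), not a theorem. Your two-part programme therefore answers a question the paper never poses.

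Part~(b) of your plan --- deriving the BAN Message Meaning Rule from the honesty hypothesis --- does appear in the paper, but as the separate Lemma~2, proved by a short Lean script chaining \texttt{honestyright}, \texttt{knowreceive}, \texttt{ktruth}, \texttt{kand}, and \texttt{kgen}. Note also that the Lean axiom \texttt{honestyright} drops the disjunction and concludes $[send_j](\iota m)$ directly, so the paper's actual derivation sidesteps the sender-identification gap you worry about at the end; your careful treatment of the extra disjunct and the ``receiver is not the originator'' side condition is more scrupulous than anything the authors do.

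Part~(a) --- isolating a class of ``honest'' models on which the hypothesis is sound and verifying an $Exp$-invariant preserved by the message deduction rules and the restricted/temporal model constructions --- has no analogue anywhere in the paper. The authors never claim semantic soundness for the honesty hypothesis; it is introduced purely as a proof-theoretic postulate needed to make the BAN translation go through. The ``main obstacle'' you anticipate is thus one of your own creation: nothing you write is mathematically unreasonable, but you are proving a much stronger statement than the paper's informal remark.
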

\begin{observation}
The second deduction hypothesis is a rule for modeling symmetric key protocols: if the $k$ key is a communication key between $i$ and $j$, then each of them knows it.
\end{observation}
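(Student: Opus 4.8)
The plan is to establish the soundness of the hypothesis in its conjunctive form, $@key_k(i,j) \to K_i@k \land K_j@k$, that is, to show it holds at every state of every DELP model, following the same semantic strategy already used for the axiom $[send_i]@m \lor [recv_i]@m \to K_i@m$. I would fix an arbitrary model $\mathcal{M} = (R, \sim, V, Exp)$ and a state $s \in R$, assume the antecedent $\mathcal{M}, s \models @key_k(i,j)$, and unfold the $@$-operator to obtain $key_k(i,j) \in Exp(s)$. The consequent is a conjunction, so it splits into the two goals $\mathcal{M}, s \models K_i@k$ and $\mathcal{M}, s \models K_j@k$; by the interpretation of the knowledge operator these reduce to proving $k \in Exp(t)$ for every $t$ with $s \sim_i t$, and $k \in Exp(t)$ for every $t$ with $s \sim_j t$, respectively.

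Next I would exploit the symmetry built into the message deduction system (37): its key rule passes from $key_k(i,j)$ to $key_k(j,i)$, and since $Exp(s)$ is closed under these rules, $key_k(i,j) \in Exp(s)$ already gives $key_k(j,i) \in Exp(s)$. Hence the two knowledge goals are interchangeable under swapping $i$ and $j$, and it suffices to treat one of them. To discharge $k \in Exp(t)$ along the $\sim_i$-edges I would make explicit the expectation condition that a shared symmetric key must satisfy, as the natural analogue of the frame conditions $V(send_i) \subseteq \sim_i$ and $V(recv_i) \subseteq \sim_i$ already imposed in the model definition: whenever $key_k(i,j) \in Exp(s)$, the key $k$ lies in $Exp(t)$ for every $t$ with $s \sim_i t$ or $s \sim_j t$. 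With this condition in force both goals are immediate, the conjunction holds at $s$, and, $s$ and $\mathcal{M}$ being arbitrary, the hypothesis is sound.

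The principal obstacle is conceptual rather than computational. The consequent asserted by the Observation, that \emph{each} of $i$ and $j$ knows $k$, is the conjunction $K_i@k \land K_j@k$, which is strictly stronger than the disjunction $K_i@k \lor K_j@k$ appearing in the displayed hypothesis; it therefore cannot be recovered from that disjunctive axiom by propositional reasoning alone, and must instead be grounded in the semantics, using the fact that a symmetric communication key is by construction held by both of its endpoints. I would accordingly record the shared-key expectation condition above as a defining clause of a DELP model, parallel to condition 5, and then the soundness argument proceeds exactly as sketched. The one point requiring care is compatibility with the restricted- and temporal-model constructions used for completeness: I must check that this extra clause is preserved by $\mathcal{M} \mapsto \mathcal{M}|_\mu$ and by $\mathcal{M} \mapsto ET(\mathcal{M})$, so that adding it leaves the bisimilarity-invariance and completeness results intact.
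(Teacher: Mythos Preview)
The paper offers no proof of this Observation: it is an informal remark explaining the intended reading of the second deduction hypothesis, not a lemma with a derivation. Your proposal therefore goes well beyond the paper's treatment by attempting a full semantic soundness argument.

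You are right to flag the mismatch between the Observation's wording (``each of them knows it'', a conjunction) and the displayed hypothesis $@key_k(i,j) \to K_i@k \lor K_j@k$ (a disjunction); that is a genuine discrepancy in the paper's text. Your proposed repair --- adding a frame condition on $Exp$ guaranteeing that a shared key propagates along $\sim_i$ and $\sim_j$ --- is a reasonable way to make the conjunctive reading sound, but no such condition appears in the paper's definition of a DELP model, and the paper never claims soundness for these two ``deduction hypotheses'' at all: they are introduced as extra axioms needed ``in order to be able to check security properties'' and are used purely syntactically in the Lean development. So while your argument is coherent as an independent contribution, it corresponds to nothing in the paper, and the additional model clause you introduce is new structure that, as you yourself note, would still have to be shown compatible with the restricted- and temporal-model constructions before the completeness proof goes through.
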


We define the following context, a set $\Gamma$ of statements:
\begin{lstlisting}[language=Lean, numbers=none, basicstyle=\small]
def ctx (σ : ℕ) : Type := set (form σ)
\end{lstlisting}

The deductive system is:
\begin{lstlisting}[language=Lean, numbers=none, basicstyle=\small]
inductive proof (σ : ℕ) : ctx σ -> form σ -> Prop 
  | ax { Γ } { p } (h : p ∈ Γ) : proof Γ p 
  | kand { Γ } { i : message σ } { p q : form σ } : proof Γ (((K i, p) ∧ (K i, q)) -> (K i, (p ∧ q)))
  | ktruth { Γ } { i : message σ } { φ : form σ } : proof Γ ((K i, φ) -> φ)
  | kdist { Γ } { i : message σ } { φ ψ : form σ } : proof Γ ((K i, (φ -> ψ)) -> ((K i, φ) -> (K i, ψ)))
  | progrdistr { Γ } { α : program σ } { φ ψ : form σ } : proof Γ ([α](φ -> ψ) -> ([α]ψ -> [α]ψ))
  | pdtruth { Γ } { α : program σ } { φ : form σ } : proof Γ (([α]φ) -> φ)
  | honestyright { Γ } { m k i j : message σ } : proof Γ ((ι (k.keys i j)) ∧ (ι ({ m } k)) -> ([send j](ι m)))
  | knowreceive { Γ } { m i : message σ } : proof Γ (([recv i](ι m)) -> (K i, (ι m)))
  | knowsend { Γ } { m i : message σ } : proof Γ (([send i](ι m)) -> (K i, (ι m)))
  | knowreceivef { Γ } { i : message σ } { φ : form σ } : proof Γ (([recv i]φ) -> (K i, φ))
  | knowsendf { Γ } { i : message σ } { φ : form σ } : proof Γ (([send i]φ) -> (K i, φ))
  | mp { Γ } { p q : form σ } (hpq : proof Γ (p -> q)) (hp : proof Γ p) : proof Γ q
  | kgen { Γ } { φ : form σ } { i : message σ } (h : proof Γ φ) : proof Γ (K i, φ)
  | pdgen { Γ } { φ : form σ } { α : program σ } (h : proof Γ φ) : proof Γ ([α]φ)
\end{lstlisting}

\subsection{BAN Rules Verification}

In order to be able to verify the corectness of the  BAN rules, we  translate them our logic. We use the following correspondence:
\begin{enumerate}
    \item formula $i \believes m$ is translated as $K_i @m$ and it means $i$ knows $m$ in current state;
    \item formula $i \sees m$ means that $i$ receives $m$ and is translated as $[recv_i]@m$;
    \item formula $i \oncesaid m$ is translated as $[send_i]@m$;
    \item formula $i \controls m$ means that $i$ has jurisdiction over $m$, so the agent knows $m$ and $m$ is true: $K_i @m \to @m$;
    \item formula $i \sharekey{k} j$ is translated as $@key_k(i, j)$;
    \item formula $\nonce{m}$ is translated as $@nonce(m)$.
\end{enumerate}

Now, we can prove that the translations in \textit{DELP} of the most important BAN inference rules (according to \cite{halpern2017epistemic}) are sound. In the sequel, using Lean, we give the proofs only for  the \textit{Message Meaning} rule and for the \textit{Jurisdiction} rule, few other rules are analysed in the Appendix. 

\begin{lemma}
The Message Meaning rule for shared key is a correct rule in the \textit{DELP} system.
$$ \frac{i \believes j \sharekey{k} i \ \ i \sees \{m\}_k}{i \believes j \oncesaid m}$$
 \end{lemma}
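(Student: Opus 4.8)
The plan is to unfold the BAN-to-\textit{DELP} translation of Section~\ref{leandelp} and then assemble the conclusion from the axioms \texttt{knowreceive}, \texttt{kand}, \texttt{kdist}, the honesty axiom \texttt{honestyright}, and the necessitation rule \texttt{kgen}. Under the stated correspondence, the premise $i \believes j \sharekey{k} i$ becomes $K_i @key_k(j,i)$, the premise $i \sees \encrypt{m}{k}$ becomes $[recv_i]@\encrypt{m}{k}$, and the conclusion $i \believes j \oncesaid m$ becomes $K_i [send_j]@m$. So it suffices to show that from $K_i @key_k(j,i)$ and $[recv_i]@\encrypt{m}{k}$ one derives $K_i [send_j]@m$ in \textit{DELP} (equivalently, working inside a context $\Gamma$ that contains the two translated premises).

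First I would use key symmetry — the message deduction rule $\frac{key_k(i,j)}{key_k(j,i)}$ from the system~(37), lifted through the $@$-operator — to rewrite the first premise as $K_i @key_k(i,j)$, matching the argument order used by \texttt{honestyright}. Next, applying the axiom \texttt{knowreceive}, i.e. $[recv_i]@\encrypt{m}{k} \to K_i @\encrypt{m}{k}$, together with modus ponens to the second premise yields $K_i @\encrypt{m}{k}$. Combining the two knowledge facts with \texttt{kand}, namely $(K_i @key_k(i,j) \land K_i @\encrypt{m}{k}) \to K_i(@key_k(i,j) \land @\encrypt{m}{k})$, gives $K_i(@key_k(i,j) \land @\encrypt{m}{k})$, up to a propositional-tautology step to match the exact conjunction shape.

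Then I would push the honesty axiom under $K_i$: since $\vdash @key_k(i,j) \land @\encrypt{m}{k} \to [send_j]@m$ is exactly \texttt{honestyright}, the necessitation rule \texttt{kgen} gives $\vdash K_i\bigl(@key_k(i,j) \land @\encrypt{m}{k} \to [send_j]@m\bigr)$, and \texttt{kdist} converts this into $K_i(@key_k(i,j) \land @\encrypt{m}{k}) \to K_i [send_j]@m$. One final modus ponens with the conjunction derived above yields $K_i [send_j]@m$, which is the translated conclusion, so the rule is admissible in \textit{DELP}.

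I expect the main obstacle to be purely bookkeeping: reconciling the argument order of $key_k$ between the translated premise and \texttt{honestyright} (this is where the symmetry rule of~(37) must be invoked, and it is slightly delicate because that rule lives at the level of messages rather than of $@$-formulas), together with the propositional reshuffling of conjuncts needed so that \texttt{kand} and \texttt{honestyright} line up syntactically. The epistemic core — turning $\vdash \varphi_1 \land \varphi_2 \to \psi$ into a derivation of $K_i\psi$ from $K_i\varphi_1$ and $K_i\varphi_2$ via \texttt{kand}, \texttt{kgen}, and \texttt{kdist} — is the standard normal-modal-logic argument and should go through without difficulty.
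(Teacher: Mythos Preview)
Your derivation is correct and uses the same ingredients as the paper's Lean proof (\texttt{knowreceive}, \texttt{kand}, \texttt{honestyright}, \texttt{kgen}), but the central modal step is organised differently. The paper, after forming $K_i(@key_k(i,j)\land @\{m\}_k)$ via \texttt{kand}, immediately strips the $K_i$ with \texttt{ktruth}, applies \texttt{honestyright} at the ground level to obtain $[send_j]@m$, and then re-introduces $K_i$ by a bare call to \texttt{kgen}. You instead keep the $K_i$ throughout: you necessitate the axiom \texttt{honestyright} and use \texttt{kdist} to push it under $K_i$. Your route is the textbook normal-modal argument and only invokes \texttt{kgen} on a theorem; the paper's shortcut works because in its Lean formalisation \texttt{kgen} is stated for arbitrary $\Gamma$ (so from $\Gamma\vdash\varphi$ one gets $\Gamma\vdash K_i\varphi$), which is strictly stronger than standard necessitation. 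Your worry about the argument order in $key_k(\cdot,\cdot)$ is legitimate under the literal translation table, but the paper simply states the hypothesis as $K_i\,@key_k(i,j)$ in the Lean lemma and never appeals to the symmetry rule of~(37).
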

\begin{proof}
We will prove this using Lean. 
\begin{lstlisting}[language=Lean, numbers=none, basicstyle=\small]
lemma MMSK_is_correct (σ : ℕ) { m k i j : message σ } { Γ : ctx σ } 
  : (σ-Γ ⊢ ((K i, (ι (k.keys i j))) ∧ ([recv i](ι { m } k)))) 
  -> (σ-Γ ⊢ (K i, ([send j](ι m)))) :=
  λ h, kgen 
    $ mp honestyright
      $ mp ktruth 
        $ mp kand
          $ andintro 
            (andleft h) 
            (mp knowreceive $ andright h). 
\end{lstlisting}
\end{proof}

A much easier demonstration is for the \textit{jurisdiction} rule, because it uses the $K$ operator distributivity over implication:
\begin{lemma}
Jurisdiction rule is a correct rule in \textit{DELP} system.
\begin{align*}
     \frac{i \believes j \controls m \ \ i \believes j \believes m}{i \believes m}
\end{align*}
\end{lemma}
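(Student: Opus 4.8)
The plan is to read off the DELP translation of the rule and then apply the normal-modal-logic distribution axiom for $K_i$. Under the correspondence fixed above, the premise $i \believes j \controls m$ becomes $K_i$ applied to the formula ``$j$ has jurisdiction over $m$'', which unfolds to the implication $K_j\,@m \to @m$; the premise $i \believes j \believes m$ becomes $K_i\,K_j\,@m$; and the conclusion $i \believes m$ becomes $K_i\,@m$. As in the Message Meaning lemma, I would package the two premises as a single conjunction in the hypothesis of the Lean statement, so that the goal has the shape $\sigma\text{-}\Gamma \vdash \bigl(K_i(K_j@m \to @m)\bigr) \land \bigl(K_i\,K_j@m\bigr) \to K_i@m$.

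The derivation itself is two steps. First, instantiate the axiom $K_a(\varphi \to \psi) \to (K_a\varphi \to K_a\psi)$ (the \texttt{kdist} constructor) with $\varphi := K_j@m$ and $\psi := @m$, and discharge its antecedent with the left conjunct of the hypothesis; by \emph{modus ponens} this yields $K_i\,K_j@m \to K_i@m$. Second, apply \emph{modus ponens} again, this time to the right conjunct $K_i\,K_j@m$, obtaining $K_i@m$. In the Lean implementation this collapses to the one-liner built from \texttt{mp}, \texttt{kdist}, and the projection helpers \texttt{andleft}/\texttt{andright} already used in the previous proof, namely \texttt{$\lambda$ h, mp (mp kdist (andleft h)) (andright h)}.

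I do not expect a genuine obstacle here. Unlike the Message Meaning rule, Jurisdiction needs neither the honesty axiom nor the $[recv_i]@m \to K_i@m$ bridge, only the $K$-distributivity that DELP inherits verbatim from dynamic epistemic logic; this is exactly why the authors flag it as ``a much easier demonstration''. The only points that require care are bookkeeping ones: making sure $\controls$ is expanded to the implication $K_j@m \to @m$ \emph{under the scope of} $K_i$ before \texttt{kdist} is applied, and checking that the conjunctive packaging of the two hypotheses matches the shape expected by \texttt{andleft} and \texttt{andright}.
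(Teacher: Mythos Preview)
Your proposal is correct and matches the paper's own proof essentially verbatim: the paper's Lean term is \texttt{$\lambda$ h, mp (mp kdist \$ andleft h) (andright h)}, which is exactly your one-liner up to the use of \texttt{\$} in place of parentheses. The surrounding explanation of how the BAN premises translate and why only \texttt{kdist} is needed is also accurate.
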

\begin{proof} 
We will prove this using Lean. 
\begin{lstlisting}[language=Lean, numbers=none, basicstyle=\small]
lemma JR_is_correct (σ : ℕ) { m i j : message σ } { Γ : ctx σ }
  : (σ-Γ ⊢ (K i, (K j, ι m) -> (ι m)) ∧ (K i, K j, ι m)) -> (σ-Γ ⊢ K i, ι m) :=
  λ h, mp 
    (mp kdist $ andleft h)
    (andright h). 
\end{lstlisting}
\end{proof}

\section{\textit{Needham-Schroeder} protocol implementation in Lean}\label{nsprotocol}

In this section we will analyze the \textit{Needham-Schroeder} protocol and we will implement the specification in \textit{Lean}, in order to prove some security properties. 
We recall the exchange of messages in \textit{Needham-Schroeder} protocol:
\begin{align*}
    A \to S: &\ A, B, N_a \\
    S \to A: &\ \{ N_a, B, K_{ab}, \{ K_{ab}, A \}_{K_{bs}} \}_{K_{as}} \\
    A \to B: &\ \{ K_{ab}, A \}_{K_{bs}} \\ 
    B \to A: &\ \{ N_b \}_{K_{ab}} \\ 
    A \to B: &\ \{ N_b - 1 \}_{K_{ab}}
\end{align*}

\subsection{Protocol description in Lean}
In this subsection we will formalize the specification in \textit{DELP} and then we will implement every \textit{DELP} formula in Lean.

\ 

\textbf{First step: intialization}

\

The initial knowledge of agents are:
\begin{align}
    &K_A (@N_A \land @key_{K_{AS}}(A, S)) \\ 
    &K_S (@key_{K_{AS}}(A, S) \land @key_{K_{BS}}(B, S) \land @key_{K_{AB}}(A, B)) \\ 
    &K_B @key_{K_{BS}}(B, S)
\end{align}
In Lean we have:
\begin{lstlisting}[language=Lean, numbers=none, basicstyle=\small]
axiom NSinit (σ : ℕ) { Γ : ctx σ } { A B S Na Kab Kas Kbs : message σ } 
  : σ-Γ ⊢ (K A, ((ι Na) ∧ (ι Kas.keys A S))) 
    ∧ (K S, ((ι Kas.keys A S) ∧ (ι Kbs.keys B S) ∧ (ι Kab.keys A B))) 
    ∧ (K B, (ι Kbs.keys B S)).
\end{lstlisting}

\ 

\textbf{First round: exchange of messages between A and S}

\ 

In \textit{DELP} we have:
\begin{align}
    [send_A][recv_S]@N_A
\end{align}
with the corresponding Lean implementation:
\begin{lstlisting}[language=Lean, numbers=none, basicstyle=\small]
axiom NS₁AtoS (σ : ℕ) { Γ : ctx σ } { A S Na : message σ } 
  : σ-Γ ⊢ [send A][recv S](ι Na). 
\end{lstlisting}

\ 

\textbf{Second round: exchange of messages between S and A}

\begin{align}
    &[send_S][recv_A]\bigg{(}@\{N_A\}_{K_{AS}} \land @\{key_{K_{AB}}(A, B)\}_{K_{AS}} \\ & \nonumber \ \ \ \ \ \ \ \ \ \ \land @\{\{key_{K_{AB}}(A, B)\}_{K_{BS}}\}_{K_{AS}}\bigg{)}
\end{align}
\begin{lstlisting}[language=Lean, numbers=none, basicstyle=\small]
axiom NS₂StoA (σ : ℕ) { Γ : ctx σ } { A B S Na Kab Kas Kbs : message σ }
  : σ-Γ ⊢ [send S][recv A]((ι {Na}Kas) 
    ∧ (ι {(Kab.keys A B)}Kas) 
    ∧ (ι {{(Kab.keys A B)}Kbs}Kas)).
\end{lstlisting}

\ 

\textbf{Third round: exchange of messages between A and B}

\ 

This is the last round we can formalize using \textit{DELP} system at the moment. For the next two round, we need a more expressive system, that can model both the knowledge and belief. However, up to this point we can prove that $K_{ab}$ is a common secret between $A$ and $B$, but we cannot prove the mutual authentication of these two agents.
\begin{align}
    [send_A][recv_B]@\{key_{K_{AB}}(A, B)\}_{K_{BS}}
\end{align}
\begin{lstlisting}[language=Lean, numbers=none, basicstyle=\small]
axiom NS₃AtoB (σ : ℕ) { Γ : ctx σ } { A B S Kab Kbs : message σ }
  : σ-Γ ⊢ [send A][recv B]ι {(Kab.keys A B)}Kbs. 
\end{lstlisting}

\subsection{Verifying security properties of Needham-Schroeder}
In order to prove some security properties, we must prove the following lemma that we will use further.
\begin{lemma}
Let $\Gamma$ be a set of statements, $i$ and $j$ two agents and $\varphi$ a formula. Then $\Gamma \vdash [send_i][recv_j]\varphi$ implies $\Gamma \vdash K_j \varphi$.
\end{lemma}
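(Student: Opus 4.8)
The plan is to give a direct syntactic derivation inside the Hilbert-style proof calculus of \textit{DELP}, using only two of its axiom schemes together with two applications of modus ponens; no semantic machinery (bisimilarity, temporal models, restricted models) is needed, so the argument is short.

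First I would peel off the outer $[send_i]$ modality. Instantiating the truth axiom for program modalities, $[\alpha]\psi \to \psi$, with $\alpha := send_i$ and $\psi := [recv_j]\varphi$, gives
$$\Gamma \vdash [send_i][recv_j]\varphi \to [recv_j]\varphi,$$
and modus ponens with the hypothesis $\Gamma \vdash [send_i][recv_j]\varphi$ yields $\Gamma \vdash [recv_j]\varphi$. Next I would use the axiom $[recv_j]\varphi \to K_j\varphi$ — the scheme stating that an agent knows any formula it has just received, taken in its general ``formula'' form rather than merely in the ``message'' form $[recv_j]@m \to K_j@m$. A second modus ponens, applied to $\Gamma \vdash [recv_j]\varphi$ and $\Gamma \vdash [recv_j]\varphi \to K_j\varphi$, delivers $\Gamma \vdash K_j\varphi$, as required.

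I do not expect any genuine obstacle here: the derivation is entirely propositional at the level of $\vdash$, requiring neither the generalization nor the necessitation rule, nor the distribution axioms. The one subtlety worth flagging is that one must invoke the versions of the two axioms that quantify over an arbitrary formula $\varphi$ (the schemes $\mathtt{pdtruth}$ and $\mathtt{knowreceivef}$ of the Lean implementation), since the statement concerns a general $\varphi$ and not something of the shape $@m$. In Lean the lemma is then discharged by the term $\lambda h,\ \mathtt{mp}\ \mathtt{knowreceivef}\ (\mathtt{mp}\ \mathtt{pdtruth}\ h)$.
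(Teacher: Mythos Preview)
Your derivation is correct and proceeds by the same purely syntactic strategy as the paper: peel off the outer box, then convert $[recv_j]\varphi$ into $K_j\varphi$ via \texttt{knowreceivef}. The only difference is in how the outer $[send_i]$ is eliminated. The paper's Lean term is
\[
\lambda h,\ \mathtt{mp}\ \mathtt{knowreceivef}\ (\mathtt{mp}\ \mathtt{ktruth}\ (\mathtt{mp}\ \mathtt{knowsendf}\ h)),
\]
i.e.\ it first uses $[send_i]\psi \to K_i\psi$ to get $\Gamma \vdash K_i[recv_j]\varphi$, then the epistemic truth axiom $K_i\psi \to \psi$ to drop $K_i$, and finally \texttt{knowreceivef}. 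You instead use \texttt{pdtruth} ($[\alpha]\psi \to \psi$) to remove $[send_i]$ in a single step. Your route is one modus ponens shorter and avoids the detour through $K_i$; the paper's version gains nothing from that detour, so your simplification is a genuine (if small) improvement.
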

\begin{proof}
We will prove this lemma using Lean.
\begin{lstlisting}[language=Lean, numbers=none, basicstyle=\small]
lemma secv_imp_knowledge (σ : ℕ) { Γ : ctx σ } { i j : message σ } { φ : form σ }
  : (σ-Γ ⊢ [send i][recv j]φ) -> (σ-Γ ⊢ K j, φ) := 
  λ h, mp knowreceivef
    $ mp ktruth 
      $ mp knowsendf h. 
\end{lstlisting}
\end{proof}

We can prove that the agent $A$ knows the communication key between $A$ and $B$.
\begin{theorem}
In \textit{Needham-Schroeder} protocol, the agent A knows the communication key between A and B.
\end{theorem}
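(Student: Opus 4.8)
The goal, in the notation of the Lean development, is to derive $\sigma\text{-}\Gamma \vdash K_A\,@key_{K_{AB}}(A,B)$ (i.e.\ \texttt{K A, \(\iota\) (Kab.keys A B)}), using only the protocol axioms \texttt{NSinit} and the three round-transmission axioms, the previously proved \texttt{secv\_imp\_knowledge}, and the DELP deduction rules — crucially the honesty axiom \texttt{honestyright} and the \texttt{knowsend}/\texttt{knowreceive} axioms. The guiding idea: in the second round the server ships $A$ the blob $@\{key_{K_{AB}}(A,B)\}_{K_{AS}}$, and since $A$ already holds the wrapping key $K_{AS}$ (from \texttt{NSinit}), $A$ is in a position to recover the session key; the work is to \emph{simulate} this decryption inside DELP, which has no primitive decryption rule on $@$-formulas.

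Concretely, I would proceed in four moves. (1) Feed the round-2 axiom, which has the shape $[send_S][recv_A](\ldots)$, to \texttt{secv\_imp\_knowledge} to obtain $K_A$ of the whole round-2 payload conjunction, then isolate the middle conjunct $K_A\,@\{key_{K_{AB}}(A,B)\}_{K_{AS}}$ by pushing a propositional conjunction-elimination tautology through $K_A$ with \texttt{kgen}, \texttt{kdist}, \texttt{mp} and the \texttt{andleft}/\texttt{andright} helpers — exactly the bookkeeping already used in the \texttt{MMSK} and \texttt{JR} lemmas. (2) From \texttt{NSinit}, extract $K_A\,@key_{K_{AS}}(A,S)$ the same way, and re-orient it to $K_A\,@key_{K_{AS}}(S,A)$ using the key-symmetry rule $\frac{key_k(i,j)}{key_k(j,i)}$ of the message deduction system (37) lifted under $K_A$. (3) Combine the two pieces with \texttt{kand} into a single $K_A$ of the conjunction $@key_{K_{AS}}(S,A)\wedge @\{key_{K_{AB}}(A,B)\}_{K_{AS}}$. (4) Instantiate \texttt{honestyright} at $k:=K_{AS}$, agents $(S,A)$, $m:=key_{K_{AB}}(A,B)$, lift it under $K_A$ (\texttt{kgen}, then \texttt{kdist} + \texttt{mp}) and discharge its hypothesis with the conjunction from step (3), yielding $K_A\,[send_A]@key_{K_{AB}}(A,B)$; finally lift \texttt{knowsend} ($[send_A]@m\to K_A\,@m$) under $K_A$ and collapse the resulting $K_A K_A$ with \texttt{ktruth}, landing at $K_A\,@key_{K_{AB}}(A,B)$.

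The step I expect to be the real obstacle is getting the \emph{orientation} of the shared key right. \texttt{honestyright} concludes $[send_j]@m$ from $@key_k(i,j)$, whereas \texttt{NSinit} supplies $@key_{K_{AS}}(A,S)$; if one does not re-orient the key first, the honesty axiom only delivers $[send_S]@key_{K_{AB}}(A,B)$, and closing with \texttt{knowsend} then produces $K_S$, not $K_A$ — so the whole argument hinges on whether key-symmetry from (37) is available at the formula level (as an axiom or an already-proved lemma). If it is, the proof closes with nothing but the routine $K_A$-pushing described above; if it is not, one would either add that symmetry step or settle for the weaker but still meaningful $K_A K_S\,@key_{K_{AB}}(A,B)$ ("$A$ knows the server holds the session key"). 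Beyond this single subtlety there are no new proof-theoretic ideas: everything reduces to threading propositional tautologies through $K_A$, as in the BAN-rule verifications.
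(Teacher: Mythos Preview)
Your argument is coherent and would close if key symmetry were available at the formula level, but it is more elaborate than the paper's route and introduces an unnecessary dependency. The paper does not work under $K_A$ throughout, nor does it use \texttt{knowsend} or re-orient the key. Instead, it extracts $K_A\,@key_{K_{AS}}(A,S)$ and $K_A\,@\{key_{K_{AB}}(A,B)\}_{K_{AS}}$ (via two auxiliary lemmas playing the role of your steps (1)--(2)), \emph{immediately strips} both $K_A$'s with \texttt{ktruth} to get the bare $@$-facts, conjoins them with \texttt{andintro}, fires \texttt{honestyright} with $(i,j)=(A,S)$ to obtain $[send_S]\,@key_{K_{AB}}(A,B)$, and then applies the axiom \texttt{pdtruth} ($[\alpha]\varphi\to\varphi$) --- which you do not mention at all --- to discard the box, yielding the raw fact $@key_{K_{AB}}(A,B)$. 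Finally necessitation (\texttt{kgen}) reattaches $K_A$. Because \texttt{pdtruth} makes the identity of the agent in the box irrelevant, the orientation problem you flag as ``the real obstacle'' never arises, and no key-symmetry lemma is needed.

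What each approach buys: yours keeps the entire derivation inside $A$'s knowledge and relies only on $K$-distribution plus \texttt{knowsend}, which is epistemically more principled but, as you correctly suspect, requires a formula-level key-symmetry step that is absent from the Lean proof system shown. The paper's approach is shorter and self-contained in the given axiom set, at the price of leaning on the rather strong \texttt{pdtruth} axiom and on necessitation over a non-empty context --- moves that are available in this particular deductive system but would be illegitimate in a more standard epistemic calculus.
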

\begin{proof}
We will prove this theorem using Lean.
\begin{lstlisting}[language=Lean, numbers=none, basicstyle=\small]
theorem A_knows_Kab (σ : ℕ) { Γ : ctx σ } { A B S Na Kab Kas Kbs : message σ }
  : σ-Γ ⊢ K A, ι(Kab.keys A B) :=
  kgen 
    $ mp pdtruth 
      $ mp honestyright 
        $ andintro 
          (mp ktruth $ A_knows_Kas A B S Na Kab Kas Kbs)
          (mp ktruth $ A_knows_Kab_encrypted_Kas A B S Na Kab Kas Kbs).
\end{lstlisting}
\end{proof}

In a similar way, we can prove that also $B$ knows the communication key between $A$ and $B$.
\begin{theorem}
In \textit{Needham-Schroeder} protocols, the agent B knows the communication key between A and B.
\end{theorem}
\begin{proof}
We will prove this theorem using Lean.
\begin{lstlisting}[language=Lean, numbers=none, basicstyle=\small]
theorem B_knows_Kab { σ : ℕ } { Γ : ctx σ } { A B S Na Kab Kas Kbs : message σ }
  : σ-Γ ⊢ K B, ι(Kab.keys A B) :=
kgen $ mp pdtruth
  $ mp honestyright
    $ andintro 
      (mp ktruth $ B_knows_Kbs A B S Na Kab Kas Kbs) 
      (mp ktruth $ secv_imp_knowledge $ NS₃AtoB A B S Kab Kbs).
\end{lstlisting}
\end{proof}

We have now that $K_{ab}$ is a common secret between $A$ and $B$, but we cannot prove that we also have a mutual authentication. We know that $K_A @key_{K_{ab}}(A, B) \land K_B @key_{K_{ab}}(A, B)$, but we don't know if $K_A K_B @key_{K_{ab}}(A, B)$ and $K_B K_A @key_{K_{ab}}(A, B)$.

\section{Conclusion and further work}


The system \textit{DELP} is closely related to the system \textit{POL} (Public observation logic \cite{halpern2017epistemic}), but it has a different semantics for $[\alpha]\varphi$: the  updated models of \textit{POL} are replaced by \textit{DEL} models  \cite{van2007dynamic}, while the set  $Exp$ represents  the "adversary knowledge" (defined as in the operational semantics from \cite{hollestelle2005systematic})  and not the  "expected observations" (as in \textit{POL}).  Even if our system is simpler than the one from \cite{halpern2017epistemic}, we are able to  translate \textit{BAN} logic and to validate \textit{BAN} inference rules. 

Our work so far shows that \textit{DELP} is a good candidate for modelling and analysing security protocols. We are aimig to define a system that has a rigourous theoretical development: it is complete and  all proofs are certified  by  Lean implementations.  

At this stage we've already noticed that further refinements are needed:  so far we used "knowledge" operators but, in order to increse our system expressiveness, we would like to  model the epistemic  "trust"; we also consider adding a temporal behaviour, in order to be able to model the property of  \textit{freshness} since, currently, we use a weaker variant, namely the uniqueness on the system (\textit{nonce}). Last but not least, we consider adding the probabilistic interpretation, following the initial idea from  \cite{halpern2017epistemic}.


On the implementation side in \textit{Lean}, we will add the proof for the completeness theorem and we will keep all the theoretical results  automatically verified for any subsequent modification.

\end{document}